\begin{document}

\mainmatter

\title{Contextuality and the Weak Axiom in the Theory of Choice}

\author{William Zeng\inst{1} \and Philipp Zahn\inst{2}}

\institute{Department of Computer Science, University of Oxford, Oxford, UK\\
\email{william.zeng@cs.ox.ac.uk},
\and
School of Economics and Political Science, University of St. Gallen, Switzerland\\
\email{philipp.zahn@unisg.ch}}
\maketitle

\vspace{-15pt}
\begin{abstract}
Recent work on the logical structure of non-locality has constructed scenarios where observations of multi-partite systems cannot be adequately described by compositions of non-signaling subsystems. In this paper we apply these frameworks to economics.  First we construct a empirical model of choice, where choices are understood as observable outcomes in a certain sense.  An analysis of contextuality within this framework allows us to characterize which scenarios allow for the possible construction of an adequate global choice rule.  In essence, we mathematically characterize when it makes sense to consider the choices of a group as composed of individual choices. We then map out the logical space of some relevant empirical principles, relating properties of these \emph{contextual choice scenarios} to no-signalling theories and to the weak axiom of revealed preference.
\keywords{contextuality $\cdot$ choice theory $\cdot$ weak axiom of choice $\cdot$ no-signalling}
\end{abstract}

\vspace{-25pt}
\section{Introduction}
In this paper we uncover a connection between observed choices in economics and empirical models in quantum physics. In particular, we show the precise relationships between the weak axiom of revealed preferences, a consistency property often imposed in choice theory, and contextuality and no-signalling conditions on measurements in the foundations of quantum physics.

The framework we use is borrowed from work in the foundations of quantum mechanics where a general logical theory of contextuality has been developed in recent years. Our work applies the empirical models and the framework of contextuality beyond quantum physics or computer science. In this sense, we are furthering the work of Abramsky et. al. \cite{sheaftheoretic,logicalBell,abramskycontextual,cohom,mansfieldDPhil} by showing that observed economic choices can be seen as one instantiation of an abstract contextual semantics.  

Our paper is related to two strands of work. On the one hand there is a growing literature that considers the consequences of quantum-resources for economic theory. This interaction between quantum foundations and the foundations of economics, where agents can make use of quantum resources, has led to results in quantum games \cite{Brunner2012}, decision theory \cite{quantumDecisionTheory}, voting systems \cite{quantumVoting}, and other areas.

On the other hand there is a well developed literature that combines classical decision theory with elements from quantum mechanics to address various empirical puzzles in a unified theoretical framework~\cite{Busemeyer_Bruza2012}. Quantum theory is used, for instance, to model ambiguity and fundamental uncertainty. In~\cite{LambertMogiliansky2009}, for example, players' preferences are allowed to be indeterminate before they make a decision. Moreover, various authors have translated probabilistic choice models into a quantum setting to account for empirical violations of classical choice theory. For instance,~\cite{Pothos2009} provides a quantum probability framework that accounts for violations of the sure-thing principle in experiments. Order effects of measurements on behavior and attitudes also have been addressed, for example, in~\cite{Wang2013a}. The utility of quantum mechanics (and often its contextual nature) in addressing these specific issues, helps motivate our move to study contextuality in general, rather than just as a part of a quantum mechanical model. 

While we share the idea of using elements from quantum motivated settings for classical decision theory, the focus in our paper is different. We consider only the observed choices of agents from particular empirical scenarios. This means that we do not provide an internal model of the agent. This focus is similar to the perspective taken by \cite{Danilov2008}, but, in contrast to their focus on measurement induced effects, our main investigation is to specify under which conditions it is possible to construct a sensible measurement at all, via an adequate global (context-independent) choice rule from local choices. Note also that while the presentation of our framework focuses on individual agents, it can easily be extended to groups of agents via the connection between decision theory and voting theory \cite{saari2005profile}.

The paper is structured as follows:
We first describe the mathematical framework of contextuality using empirical models. Next, we define choice scenarios and show how observed choices fit into this setting. In Sections 4 and 5 we state the general definitions and theorems that characterize the relationship between contextual semantics and choice scenarios. We show that, while choice scenarios that do not obey the weak axiom can be either contextual or non-contextual, choice scenarios that do not obey the weak axiom must be contextual. We also show that the weak axiom is strictly weaker than the no-signalling condition. Moreover, we show that under sufficient overlap of budgets, made precise in the paper, the weak axiom and no-signalling are equivalent. Section 6 briefly indicates how our setting can be extended to include probabilistic choices and mentions how they can be characterized by the logical Bell inequalities of \cite{logicalBell}. We conclude with a discussion that interprets these results in economic terms.
\vspace{-14pt}
\section{Mathematical Framework of Contextuality}

In~\cite{logicalBell}, the authors generalize a notion of contextuality from the quantum mechanical setting into an abstractly logical one that can be applied to many empirical scenarios. 

\begin{definition}
An \emph{{\bf empirical scenario}} is given by
\begin{enumerate}
\item a set of measurements $X$
\item a set $O$ of possible outcomes for the measurements
\item subsets $U\subseteq{X}$ that represent possible \emph{measurement contexts}
\item $\mathcal{U}$, a subset of the powerset of $X$ that defines the set of all possible measurement contexts. We will call this the \emph{set of feasible experiments}.
\end{enumerate}
\end{definition}

\begin{example}
As a simple example, we consider an empirical scenario given by two systems $A$ and $B$.  We further posit that on each system we have a choice of two different measurements, each of which has outcomes either $0$ or $1$, i.e. $O=\{0,1\}$. These systems could, for example, be two coins and we could either check if a coin is heads or weigh it to check if it is heavier than 1 gram.  This means our empirical scenario consists of four boolean variables, $X = \{a,b,a',b'\}$ where:
\begin{description}
\item $a$ is 1 iff the first coin is heads.
\item $a'$ is 1 iff the first coin weighs over a gram.
\item $b$ is 1 iff the second coin is heads.
\item $b'$ is 1 iff the second coin weighs over a gram.
\end{description}
As our scenario only allows us to choose one of the two measurements on the coin at a time, an example of a measurement context is $U=\{a',b'\}$, corresponding to weighing both coins. The complete set of feasible experiments is given by these two element subsets of $X$:
\[ \mathcal{U} = \{\{a,b\},\{a,b'\},\{a',b\},\{a',b'\}\}. \]

Joint outcomes for experiments are then given by functions from measurement contexts to the set of outcomes.  In our example, an example function
\[ \{a\mapsto 1, b\mapsto1 \} \]
represents measuring both coins to be heads.
\end{example}

\begin{definition}
Given an empirical scenario where $O=\{0,1\}$, a \emph{{ \bf binary empirical model}} is a map $C:\mathcal{U}\to\mathcal{P}(\mathcal{P}(X))$ such that
\[ U\mapsto C(U), \]
where $C(U)$ is a set of subsets of measurement context $U$ that could have outcome 1.\footnote{In other literature, $C(U)$ is referred to as the \emph{support} of the measurement context $U$ under the model $C$.}
\end{definition}

\begin{example}
Consider our coin scenario, but where the first coin is a double-sided heads coin. We can specify the binary empirical model $C$ explicitly:
\begin{align*}
\{a,b\}   \mapsto \{a,\{a,b\}\} \\
\{a',b\}  \mapsto \{\phi, a', b, \{a',b\}\} \\
\{a,b'\}  \mapsto \{a,\{a,b'\}\} \\
\{a',b'\} \mapsto \{\phi, a', b', \{a',b'\}\}
\end{align*}
We can more easily represent the empirical model with a table like the following:

\begin{tabular}{p{2cm}|p{2cm}|p{2cm}|p{2cm}|p{2cm}}
         & (0,0) & (1,0) & (0,1) & (1,1) \\\hline
 (a,b)   & 0     &  1    & 0     & 1  \\
 (a',b)  & 1     &  1    & 1     & 1  \\
 (a,b')  & 0     &  1    & 0     & 1  \\
 (a',b') & 1     &  1    & 1     & 1  \\
 \end{tabular}
 \newline\newline
In this table, the choice of row denotes a measurement context.  Each column then represents a particular outcome.  For example, the left-most column corresponds to an outcome of 0 for both systems.  The column to its right corresponds to an outcome of 1 for the first system and an outcome of 0 for the second. This would mean that $\{a\} \in C(U)$. The value of each cell in the table is defined by the following rule: 1 if the set of measurements with outcome 1 is contained in $C(U)$ and zero otherwise.  In our example the double-headed first coin means that it is impossible for us to get a 0 outcome for the first coin when we check its side.  This gives the zeros on the first and third columns.
\end{example}

\subsection{Contextuality}
It is important to note that we do not have completely free choice in choosing the empirical model as some empirical models lead to contradictions.  We  reproduce one such example from \cite{logicalBell} that is based on the Hardy paradox about possible outcomes for certain quantum mechanical systems.\footnote{The Greenberger-Horne-Zeilinger states \cite{GHZ}.}

\begin{example}
In this example we again have two systems, measurements $X = \{a,b,a',b'\}$, and outcomes $O=\{0,1\}$.  Consider the binary empirical model given by the following table:
\newline
\begin{tabular}{p{2cm}|p{2cm}|p{2cm}|p{2cm}|p{2cm}}
         & (0,0) & (1,0) & (0,1) & (1,1) \\\hline
 (a,b)   & 1     &  1    & 1     & 1  \\
 (a',b)  & 0     &  1    & 1     & 1  \\
 (a,b')  & 0     &  1    & 1     & 1  \\
 (a',b') & 1     &  1    & 1     & 0  \\
 \end{tabular}\label{tab:exp_doublecoin}
 \newline\newline
 The model specified by this table is logically inconsistent by the following reasoning.  Interpret outcome 0 as false and outcome 1 as true and consider the following formulas:
 \[ a\wedge b, \qquad \neg(a\wedge b'), \qquad\neg(a'\wedge b), \qquad a'\vee b'.\] 
According to the empirical model these should all be possible (true).  However, it is impossible to find individual assignments of $a,b,a',b'$ to be true or false that manifest this \cite{logicalBell}. This observation forces us to conclude that the given empirical model cannot be constructed by a composition of systems with defined values for $a,b,a',$ and $b'$.  This empirical model is \emph{contextual}.
\end{example}
 
In short, should we encounter a system that fits a contextual model, then we know that its behavior is not modelled by the composition of a series of separate subsystems.  Likewise, we know that a composition of individual systems will never generate contextual behavior.

\section{Choice}
This section applies the above empirical framework to an economic choice setting over a set of alternatives $X$. An agent has to choose from menus of alternatives which comprise subsets of all possible alternatives. 

\begin{definition}
A \emph{{\bf choice scenario}} is defined as the following:
\begin{enumerate}
\item A set of alternatives $X$.
\item A set $O = \{0,1\}$. $1$ indicates that an element $x\in X$ is possibly chosen.
\item Subsets $U\subseteq{X}$ that represent possible \emph{menus} an agent can choose from.
\item A set $\mathcal{U}$ of menus that represents the \emph{set of feasible menus} an agent can face.
\item A global choice rule $C:\mathcal{U}\to\mathcal{P}(\mathcal{P}(X))$ such that $U\mapsto C(U)$. This means that $C(U)$ is the set of subsets of elements in a given menu $U$ that are possibly chosen, i.e. whose outcome is $1$.
\end{enumerate}

\end{definition}

The set of alternatives $X$ can represent various types of choice problems. It could be a set of consumer goods, a list of political candidates, a set of survey questions, etc. The crucial point here is that it may be impossible to get a complete answer to the whole set $X$ given only answers to particular menus one at a time. There may be various reasons for this impossibility and we will illustrate with some examples.

When only partial information can be gathered, the key question is: Can we infer from the local choices what an agent would choose if he or she faced the complete set of alternatives? Under which conditions is it impossible to aggregate local choices in a coherent way?

\begin{example}\label{exp:wine}
Consider the following setting. A retailer offers wine at two periods in time. Availability of wines or other concerns may naturally dictate which menus he can offer at a given time. Let the alternatives be given by
\[X=\left\{Riesling14, Pinot Blanc14, Riesling13, Pinot Blanc13\right\}.\]   We will use the shorthand labels $a,a',b,b'$ for the wines respectively so that

\begin{description}
\item $a$ is 1 iff Riesling13 is chosen.
\item $a'$ is 1 iff Pinot Blanc13 is chosen.
\item $b$ is 1 iff Riesling14 is chosen.
\item $b'$ is 1 iff Pinot Blanc14 is chosen.
\end{description}

In such a choice scenario, choices could be observed to follow a contextual rule as in Example \ref{tab:exp_doublecoin}. The contradiction in this scenario is precisely the same one in the physical measurement setting. 
\end{example}

Another example would be to consider two waiters who are taking orders, one takes orders for food, and the other takes orders for beverages such that $X=\left\{beef,cake,wine,coffee\right\}$.

What is the economic meaning of a global choice rule? Its simplest interpretation is that if it were possible to let an agent choose from the global menu, where all alternatives are available at the same time, the global choice rule determines what the agent will choose. In the following section we make precise when one can aggregate local choices into global ones. 

Note that the impossibility of aggregating choices coherently depends on the analyst's definition of the set of alternatives. In the example above, a redefinition of the goods would have cleared the contradictions. However, this presupposes an understanding of the situation and background knowledge which, though obvious in these examples, is not necessarily available to the analyst in general. Consider another example where a coherent aggregation is obviously impossible. 

\begin{example}\label{exp:Luce}[due to Luce and Raiffa \cite{luce1957games}]
Suppose an agent is going to a restaurant twice. The set of meals is ${Salmon, Steak, Frog Legs, Fried Snails}$. The first time he faces the following menu ${Salmon, Steak}$ and chooses ${Salmon}$. But the next time, facing the menu ${Salmon, Steak, Frog Legs}$, he chooses ${Steak}$. 
\end{example}

In contrast to the examples before, the reasons why a global choice rule does not exist is less obvious. For instance, it could mean that crucial information is missing which would explain the agent's choices, or it could be the case that the agent is choosing in an incoherent way: potentially his choices are affected by the context itself.\footnote{There are many alternative explanations. See also \cite{Danilov2008} on the possibility that the measurement itself is changing the agent's preferences.} All of these choice examples and any others can be brought under the unified heading of empirical scenarios. 
\begin{theorem}
\label{thm:connection}
Every choice scenario defines a unique empirical scenario with binary empirical model and vice versa.
\end{theorem}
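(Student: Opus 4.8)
The plan is to show that the data defining a choice scenario and the data defining an empirical scenario equipped with a binary empirical model coincide component for component, and that the passage between the two involves no free choice --- hence is unique --- in each direction. I would therefore establish the theorem by exhibiting a map sending each choice scenario to an empirical scenario with binary model, a map in the reverse direction, and checking that the two are mutually inverse.

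First I would define the forward map. Given a choice scenario with alternatives $X$, outcome set $O = \{0,1\}$, feasible menus $\mathcal{U} \subseteq \mathcal{P}(X)$, and global choice rule $C \colon \mathcal{U} \to \mathcal{P}(\mathcal{P}(X))$, I would read off an empirical scenario by reinterpreting $X$ as a set of measurements, $O = \{0,1\}$ as the outcome set, each feasible menu as a measurement context, and $\mathcal{U}$ as the set of feasible experiments. The crucial observation is that $C$ already has exactly the type $\mathcal{U} \to \mathcal{P}(\mathcal{P}(X))$ demanded of a binary empirical model, and that the constraint on $C$ in a choice scenario --- that $C(U)$ consist of subsets of the menu $U$ whose outcome is $1$ --- is verbatim the support condition $C(U) \subseteq \mathcal{P}(U)$ required of a binary empirical model. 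Thus $C$ itself serves as the binary empirical model on the associated empirical scenario, with no further data to be specified.

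Conversely, I would define the backward map by running these identifications in reverse: from an empirical scenario with $O = \{0,1\}$ and binary empirical model $C$ I would produce a choice scenario, reinterpreting measurements as alternatives, the outcome $1$ as ``possibly chosen,'' contexts as menus, feasible experiments as feasible menus, and $C$ as the global choice rule. Since each component is transported unchanged and only its economic or physical interpretation differs, composing the two maps in either order returns the original structure exactly. This establishes simultaneously that the correspondence is a bijection and that the associated object is unique in each direction.

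The point requiring care, rather than a genuine obstacle, is the status of the outcome set. The general definition of an empirical scenario permits an arbitrary $O$, whereas a binary empirical model presupposes $O = \{0,1\}$; the theorem must be read as a correspondence with empirical scenarios of exactly this binary type, which matches the fixed outcome set built into every choice scenario. Once attention is restricted to the binary case, the only verification of substance is that the support condition $C(U) \subseteq \mathcal{P}(U)$ is literally the same requirement on both sides, after which the bijection and its uniqueness follow immediately from the coincidence of the underlying data.
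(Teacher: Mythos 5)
Your proposal is correct and follows essentially the same route as the paper, which simply observes that the two definitions coincide component by component and records the correspondence as a glossary (alternatives $\leftrightarrow$ measurements, menus $\leftrightarrow$ contexts, global choice rule $\leftrightarrow$ binary empirical model). You merely spell out what the paper leaves implicit --- the forward and backward maps and the check that they are mutually inverse --- plus the sensible caveat that the correspondence is with \emph{binary} empirical scenarios only.
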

\begin{proof}
This is clear by inspecting the definitions.  The following table provides a glossary of sorts:
\begin{align*}
\begin{tabular}{p{5cm}|p{5cm}}
 Choices & Measurements  \\ \hline
 alternatives & measurements \\
 choice or non-choice  & outcomes \\
 menus  & measurement contexts \\
 all feasible menus & all feasible experiments \\
 global choice rule & binary measurement model \\
 \end{tabular}\label{glossary}
\end{align*}
\end{proof}
\section{Generalized Choice Contextuality}
Having illustrated by example the connection between empirical scenarios and choice settings, we can leverage this correspondence more generally. We begin with some definitions of contextuality from the quantum foundations literature\cite{logicalBell}.

A restriction of a function $s:X\to \{0,1\}$ to $U\subseteq X$ will be written $s|X$.  The support of $s|X$ are all the elements that are mapped to $1$ by it, as in the following example.  Take $X = \{a,b,c,d\}$ and $s = \{a\mapsto 1,b\mapsto 1, c\mapsto 0, d\mapsto 0\}$. Consider the restriction to $U = \{a,b,d\}$. We obtain $s|U = \{a\mapsto 1,b\mapsto 1, d\mapsto 0\}$ and the support of $s|U$ is $\{a,b\}$.

\begin{definition}
Given an empirical scenario $(X,O,\mathcal{U})$ and a binary empirical model $C$, a \emph{{\bf global section}} is an assignment $s:X\to O$ such that for all $U\in\mathcal{U}$ the support of $s|U$ is in $C(U)$.
\end{definition}

In physical terms, this says that a global section\footnote{As noted in \cite{logicalBell}, the terminology global section arises because these binary empirical models can be given the structure of a presheaf.  More details on this approach can be found in \cite{sheaftheoretic}.} gives a specific outcome to every measurement that can be used to represent any particular measurement context.  In other words, the binary empirical model can be reproduced from restrictions of the global section. 

In the choice setting, a global section means that we are able to assign a choice or non-choice to each alternative in such a way as to reproduce the choices that are made when restricted to any particular menu.  In this sense, the existence of such a global section determines whether or not our choices depend on the menu with which they are presented.  The following makes this intuition more precise:

\begin{definition}[adapted from \cite{sheaftheoretic}]
\label{def:noncontext}
A \emph{binary empirical model} is \emph{{\bf possibilistically noncontextual}} if for every element $\eta \in C(U)$ for some $U$, there is a global section $s'$ such that $\eta$ is in the support of $s'|U$.
\end{definition}
When Definition \ref{def:noncontext} does not hold, the model is \emph{contextual}.

The correspondence between empirical models and choice scenarios from Theorem \ref{thm:connection} motivates the construction of the following definition: 

\begin{definition}[Contextual choice scenarios]
A \emph{choice scenario} is \emph{{\bf non-contextual}} if and only if for every element $\eta\in C(U)$ for some $U$, there is a global section $s'$ such that $\eta$ is in the support of $s'|U$. Otherwise the choice scenario is \emph{{\bf contextual}}.
\end{definition}
As a way of interpreting contextuality in economic terms, we can imagine non-contextual choice scenarios as those where we would make the same choices if we were presented every alternative at once in a single menu.

\begin{definition}[Strongly contextual choice scenarios]
A choice scenario is \emph{\bf {strongly contextual}} if there exists no global section.
\end{definition}
Scenarios that are strongly contextual\footnote{The notion of strong contextuality in contextual semantics comes from \cite{sheaftheoretic}. } do not even have some sub-part that can be modeled as choices made independent of context.

\section{Choice Scenarios and the Theory of Choice}
\label{sec:space}

So far, we have interpreted the measurement of choices as a purely empirical approach without any reference to a specific economic theory. In this section, we will link this empirical approach to the theory of choice. To this end, we need the notion of \emph{budgets}. Economic agents have wealth constraints such that not all alternatives may be affordable given a certain wealth level. Changing income, or changing prices of goods, may alter the set of alternatives that are available. In the following, we will give menus an alternative interpretation: they represent the budget; thus they represent the alternatives that are affordable for a given agent.\footnote{Obviously, one could also consider a blend of the two views: (i) what is presented in a menu and (ii) what is affordable? We focus on the extreme case for simplicity.}   

When menus and budgets coincide, choice scenarios will always allow us to capture the observed behavior of agents acting according to the usual economic choice structures \cite[p.9]{mas-colell_microeconomic_1995}.  In fact, choice scenarios are more general than the usual choice structures, as the following is clear by definition:
\begin{proposition}
A choice structure is a choice scenario where $|C(U)|=1$ for all $U$.
\end{proposition}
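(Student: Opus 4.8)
The plan is to realize the class of choice structures in the sense of \cite[p.9]{mas-colell_microeconomic_1995} as exactly those choice scenarios with $|C(U)|=1$ for all $U$, via a bijection that is a direct unfolding of the two definitions. Recall that a choice structure is a pair $(\mathcal{B},C_{\ast})$ where $\mathcal{B}$ is a family of budget sets $B\subseteq X$ and $C_{\ast}$ assigns to each $B\in\mathcal{B}$ a nonempty set $C_{\ast}(B)\subseteq B$ of chosen alternatives. The essential observation is that this single chosen set, which may itself contain several tied alternatives, should be read as \emph{one} possibly-chosen subset.

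First I would define the forward map. Given $(\mathcal{B},C_{\ast})$, keep $X$, set $O=\{0,1\}$ and $\mathcal{U}=\mathcal{B}$, and define the global choice rule by $C(U)=\{\,C_{\ast}(U)\,\}$. Since $C_{\ast}(U)\subseteq U$, the value $C(U)$ is a subset of $\mathcal{P}(U)\subseteq\mathcal{P}(X)$, so $C$ has the correct type $\mathcal{U}\to\mathcal{P}(\mathcal{P}(X))$; and as a singleton it satisfies $|C(U)|=1$. Note that this cardinality counts possibly-chosen \emph{subsets}, not chosen alternatives, so a choice rule that selects many tied alternatives still maps to a scenario with $|C(U)|=1$.

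Next I would verify invertibility. For any choice scenario with $|C(U)|=1$ for all $U$, write the singleton uniquely as $C(U)=\{\eta_U\}$ with $\eta_U\subseteq U$, and set $C_{\ast}(U)=\eta_U$; this returns a Mas-Colell choice rule on the budget family $\mathcal{B}=\mathcal{U}$. The assignments $C_{\ast}(U)\mapsto\{C_{\ast}(U)\}$ and $\{\eta_U\}\mapsto\eta_U$ are mutually inverse, so the correspondence is a bijection. All of this is formal, which is why the proposition is ``clear by definition.''

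The one point I would flag for care is the empty choice. The classical definition requires $C_{\ast}(B)\neq\emptyset$, whereas the choice-scenario framework permits $\emptyset\in C(U)$, as witnessed by the $\phi$ entries in the earlier tables. Strictly, then, choice structures match those choice scenarios with $|C(U)|=1$ and $\emptyset\notin C(U)$; this nonemptiness convention is the only place where the identification is not entirely literal, and I would state it rather than gloss over it.
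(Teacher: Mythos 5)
Your proposal is correct and takes essentially the same route as the paper, which offers no proof beyond declaring the identification ``clear by definition''; your explicit unfolding into a bijection $C_{\ast}(U)\mapsto\{C_{\ast}(U)\}$ is exactly the intended reading, in particular that $|C(U)|=1$ counts possibly-chosen \emph{subsets} rather than alternatives. Your closing caveat about nonemptiness ($\emptyset\notin C(U)$ being needed to match Mas-Colell's requirement $C_{\ast}(B)\neq\emptyset$) is a genuine refinement that the paper silently glosses over, and stating it is the right call.
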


\subsection{The Weak Axiom and Contextuality}

A central question in the theory of choice is how behavior changes over different budgets. As a rationality requirement, consistency of choices is imposed via the weak axiom of revealed preferences. In our choice scenario setting, this has the following form:

\begin{definition}
A choice scenario \emph{obeys the weak axiom} if for every pair of budgets $A,B$, with elements $x,y\in A\cap B$ such that $x\in C(A)$ and $y\in C(B)$ then
\[ x\in C(B). \]
\end{definition}

For a simple example where the weak axiom is violated reconsider Example \ref{exp:Luce}. As Salmon is preferred over Steak in the first menu, it should also be chosen in the larger menu ${Salmon, Steak, Frog Legs}$. 

In the following, we investigate the relationship between contextual choice scenarios and ones whose choice rules obey the weak axiom (Figure \ref{fig:venn}).

\begin{theorem}
\label{thm:context-wa}
Choice scenarios that do not obey the weak axiom are contextual if the set of budgets is closed under intersection.
\end{theorem}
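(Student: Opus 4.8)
The plan is to argue by contradiction: I will assume the scenario is \emph{non}-contextual while the weak axiom fails, and derive an impossibility. Unwinding the definitions, non-contextuality says that every local section extends globally: for each budget $U$ and each $\eta \in C(U)$ there is a global assignment $s\colon X \to \{0,1\}$ whose support on $U$ equals $\eta$ and which, crucially, satisfies the global-section condition that the support of $s|V$ lies in $C(V)$ for \emph{every} budget $V \in \mathcal{U}$. The preliminary step is to unpack the abbreviated choice notation in the weak axiom: ``$x \in C(A)$'' means that some feasible outcome of $A$ selects $x$, i.e.\ there is $\eta_A \in C(A)$ with $x \in \eta_A$, while ``$x \notin C(B)$'' means that \emph{no} element of $C(B)$ contains $x$.

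The core of the argument is a single extend-then-restrict move. Suppose the weak axiom fails, witnessed by budgets $A,B$ and alternatives $x,y \in A\cap B$ with $x \in C(A)$, $y \in C(B)$, but $x \notin C(B)$. I pick $\eta_A \in C(A)$ with $x \in \eta_A$; by the assumed non-contextuality there is a global section $s$ realizing $\eta_A$ on $A$, so in particular $s(x)=1$. Since $s$ is global, its restriction to $B$ is again a feasible outcome, namely the support of $s|B$ is an element of $C(B)$; and because $x \in B$ with $s(x)=1$, this support contains $x$. Hence $C(B)$ has an element through $x$, i.e.\ $x \in C(B)$, contradicting the violation. Closure under intersection is what lets one localize this clash to a single genuine context: $I = A\cap B$ is itself a feasible budget containing both $x$ and $y$, so the restriction $s|I$ is constrained by $C(I)$, and the companion section extending the witness $y \in C(B)$ can be compared against $s$ on the common budget $I$ rather than across two unrelated domains. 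This is the step where the hypothesis on $\mathcal{U}$ is genuinely invoked.

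I expect the main obstacle to be bookkeeping rather than depth. The two delicate points are (i) matching the paper's abbreviation $x \in C(U)$ to the correct existential statement about supports, so that the weak axiom and the section-extension property are stated over the same objects, and (ii) verifying that the restricted assignment's support on $B$ is a \emph{bona fide} member of $C(B)$ --- this is precisely the global-section condition, and it is the only place where ``$s$ is global'' (as opposed to merely a section over $A$) does real work. The role of intersection-closure is to guarantee that the budget $A\cap B$ on which the two revealed choices conflict is actually available in $\mathcal{U}$, so that the failure of extendability is pinned to an admissible context of the scenario; with that secured, the contradiction falls out immediately from the restriction property of global sections.
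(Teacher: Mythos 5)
Your proposal is correct and follows essentially the same route as the paper's proof: assume non-contextuality, extend the witnessing outcome $\eta_A \ni x$ to a global section $s$ (so $s(x)=1$), and observe that the global-section condition forces the support of $s|B$ to be an element of $C(B)$ containing $x$, contradicting $x \notin C(B)$. One small remark: your claim that intersection-closure is ``genuinely invoked'' in a separate localization step is not borne out by your own argument --- the contradiction is already complete before that paragraph, just as in the paper, whose appeal to $A \cap B$ is likewise only a rephrasing of the clash between $s(x)=1$ (forced by $A$) and $s(x)=0$ (forced by $B$).
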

\begin{proof}
As our scenario does not obey the weak axiom, we know that there exists a pair of budgets $A,B$, with elements $x,y\in A\cap B$ such that $x\in C(A)$ and $y\in C(B)$, but $x$ is not in $C(B)$. We will use this to demonstrate that there exists no global section that can adequately assign a value to $x$.

Suppose there were such a section $s:X\to O$. As $x\in C(A)$ but not in $C(B)$, we would require $x$ to be in the support of $s|A$ but not in the support of $s|B$. As budgets are closed under intersection, we reach a contradiction when trying to assign $x$ to the support of $s|A\cap B$. As we cannot construct a global section that satisfies the definition of possibilistic noncontextuality, the choice scenario must be contextual.
\end{proof}

\begin{theorem}
Choice scenarios that obey the weak axiom can be either contextual or non-contextual.
\end{theorem}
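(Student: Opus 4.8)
The plan is to establish this possibility claim by exhibiting two witnesses, both obeying the weak axiom: one that is non-contextual and one that is contextual. Pleasantly, both can be read off the empirical models already displayed above and transported into the choice setting by Theorem~\ref{thm:connection}. The organizing observation is that the weak axiom constrains only how single alternatives are chosen across overlapping budgets, whereas contextuality is a property of the joint support structure; when budgets overlap in at most one alternative, the former is automatically satisfied while the latter can still fail.

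First I would record the key structural remark. In a scenario whose feasible budgets pairwise intersect in at most one element, the hypothesis of the weak axiom, namely $x,y \in A \cap B$, forces $x = y$; the conclusion $x \in C(B)$ then coincides with the assumption $y \in C(B)$ and holds trivially. Hence every such scenario obeys the weak axiom, reading $x \in C(U)$ as the statement that $x$ lies in the support of some pattern in $C(U)$, i.e. that $x$ is possibly chosen from $U$. Both two-system models above have exactly this budget geometry, since each pair of the four budgets $\{a,b\},\{a',b\},\{a,b'\},\{a',b'\}$ meets in a singleton or the empty set.

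For the non-contextual witness I would take the double-headed-coin model. A short check shows its global sections are precisely the assignments with $s(a)=1$, forced by the two budgets containing $a$ whose supports always include $a$, while the two primed budgets impose no constraint. Running through the table, every support element of every budget is realized as a restriction of one such global section, so the model is possibilistically non-contextual in the sense of Definition~\ref{def:noncontext}. Combined with the structural remark, this yields a choice scenario that obeys the weak axiom and is non-contextual.

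For the contextual witness I would take the Hardy/GHZ model of Example~\ref{tab:exp_doublecoin}, which is already shown there to be contextual. By the structural remark it obeys the weak axiom, and via Theorem~\ref{thm:connection} it is realized by a genuine choice scenario such as the wine Example~\ref{exp:wine}; together the two witnesses prove the claim. I expect the only real work, and the main obstacle, to be the routine but careful non-contextuality verification for the double-headed-coin model, since the contextuality of the Hardy model is already in hand and the weak-axiom checks reduce to the singleton-intersection observation once the reading of $x \in C(U)$ is fixed. The one subtlety worth flagging is precisely that reading: I must confirm that interpreting $x \in C(U)$ as possible choosability is what trivializes the weak axiom under singleton overlaps, so that the argument does not secretly rely on budgets being closed under intersection, which would instead invoke Theorem~\ref{thm:context-wa}.
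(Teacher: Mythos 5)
Your proof is correct, and it shares the paper's overall strategy: exhibit two weak-axiom scenarios, one non-contextual and one contextual. But your witnesses differ from the paper's, and the difference matters for how much verification is needed. The paper uses two minimal scenarios on $\{a,b,c\}$: budgets $A=\{a,b\}$, $B=\{a,c\}$ with $C(A)=C(B)=\{a\}$ (non-contextual, realized by the global section sending only $a$ to $1$), and budgets $A=\{a,b\}$, $B=\{b,c\}$ with $C(A)=\{a\}$, $C(B)=\{b\}$ (contextual, since a global section would need $s(b)=0$ to match $C(A)$ and $s(b)=1$ to match $C(B)$); in both cases the weak axiom holds by inspection, vacuously in the second since $b\notin C(A)$. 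You instead recycle the two four-context empirical models of Section 2. This buys you a pre-certified contextual witness (the Hardy model, whose contextuality the paper has already established) and an explicit, reusable lemma --- budgets pairwise overlapping in at most one alternative satisfy the weak axiom trivially --- which the paper exploits only implicitly in its hand-built examples. The cost is that your non-contextual witness requires verifying possibilistic non-contextuality of the double-headed-coin model; your sketch of that check is right (the global sections are exactly the assignments with $s(a)=1$, and every support element of every context is the restriction of one of them), but it is work the paper's three-element example avoids entirely. Both routes are sound: the paper's is more economical, while yours isolates more cleanly the structural reason the two notions decouple, namely that the weak axiom only constrains single alternatives across overlapping budgets while contextuality constrains joint supports, including non-choices.
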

\begin{proof}
We show this with two examples. In the first, let budget $A=\{a,b\}$ and budget $B = \{a,c\}$. Choose a global section $s:\{a,b,c\}\to\{0,1\}$ that sends only $a$ to 1. A choice rule that has $C(A)=C(B)=\{a\}$ obeys the weak axiom and is non-contextual.
As the second example, let $A=\{a,b\}$ and $B = \{b,c\}$. Again, choose a global section $s:\{a,b,c\}\to\{0,1\}$ that sends only $a$ to 1. A choice rule that has $C(A)=\{a\}$ and $C(B)=\{b\}$ obeys the weak axiom and is contextual.
\end{proof}
In some sense, this decoupling of the weak axiom and contextuality results from the fact that the weak axiom only conditions the behavior of the choice rule for elements that are in $C(U)$, i.e. elements that are chosen.  Contextuality, on the other hand, requires a consistency over contexts where elements are not chosen as well, i.e. if an element is not chosen in some context, then it must also be not chosen in other contexts.

\subsection{The Weak Axiom and No-Signalling}

The weak axiom can also be related to the no-signalling condition for empirical models, whose definition from \cite{sheaftheoretic} we adapt to our setting. For a budget $A\in\mathcal{U}$, let $f_A:A\to\{0,1\}$ be the choice function that sends an element to 1 when it is chosen - i.e. is in some element of $C(A)$ - and sends an element to 0 when it is not.

\begin{definition}
A choice scenario $(X,\mathcal{U},C)$ is \emph{non-signalling} if and only if for any two budgets $A,B\in\mathcal{U}$:
\[f_A |( A\cap B) = f_B | (A\cap B),\]
i.e. all the choice functions have to agree when restricted to their intersections.
\end{definition}

In our framework, no-signalling choices are those where the choice or non-choice of an alternative must be consistent across all the budgets in which that alternative appears.  It is perhaps not surprising then that such a strong global consistency condition obeys the weak axiom, as is shown in the following theorem:
\begin{theorem}
\label{thm:nsImplies}
Non-signalling choice scenarios obey the weak axiom.
\end{theorem}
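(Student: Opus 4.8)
The plan is to unfold both definitions and observe that non-signalling is precisely the element-wise consistency statement that the weak axiom requires. First I would translate the weak-axiom hypothesis into the language of the choice functions $f_A$ and $f_B$ introduced just before the theorem: the condition ``$x\in C(A)$'' means that $x$ is chosen in $A$, i.e. $x$ lies in some element of $C(A)$, which is exactly the statement $f_A(x)=1$; likewise ``$x\in C(B)$'' is $f_B(x)=1$. Pinning down this correspondence between the support $C(U)$ (a set of subsets of $U$) and the induced choice function $f_U$ is the one place where care is needed, and once it is fixed the rest is immediate.

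Next I would set up the implication directly. Suppose the scenario is non-signalling and take any budgets $A,B\in\mathcal{U}$ together with elements $x,y\in A\cap B$ such that $x\in C(A)$ and $y\in C(B)$; the goal is to show $x\in C(B)$. Since $x\in A\cap B$, I can evaluate the non-signalling equality $f_A|(A\cap B)=f_B|(A\cap B)$ at the point $x$, obtaining $f_B(x)=f_A(x)$. The hypothesis $x\in C(A)$ gives $f_A(x)=1$, hence $f_B(x)=1$, which is exactly the claim $x\in C(B)$. This closes the argument.

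The step I expect to be the only subtle one is the notational bookkeeping just described, rather than any genuine mathematical obstacle: the conclusion follows from a single application of the non-signalling equation. It is worth flagging that the hypothesis $y\in C(B)$ is never actually used --- non-signalling forces the chosen/not-chosen status of \emph{every} shared alternative to agree across budgets, which is strictly more than the weak axiom demands, so the implication is comfortably one-directional. This already foreshadows that the converse should fail, and that the weak axiom is strictly weaker than no-signalling, consistent with the claims made in the surrounding discussion.
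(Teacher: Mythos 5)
Your proof is correct and follows essentially the same route as the paper's: both unfold the weak axiom into the language of the choice functions $f_A$, $f_B$ and then apply the non-signalling agreement $f_A|(A\cap B)=f_B|(A\cap B)$ at the point $x$ to conclude $f_B(x)=f_A(x)=1$. Your write-up is in fact more explicit than the paper's (which only gestures at this step), and your observation that the hypothesis $y\in C(B)$ is never used correctly anticipates the paper's subsequent theorem that the weak axiom is strictly weaker than no-signalling.
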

\begin{proof}
Using this notion of choice functions for budgets, the weak axiom states that if $f_A(x) = 1$ and $f_B(y) = 1$ then $f_B(x) = 1$. By the symmetry of the definition, it is easy to see that $f_A(y) = 1$ necessarily as well. If $f_A(x) = 1$ and $f_B(y) = 1$, then no-signalling implies the same.  
\end{proof}

Indeed, the no-signalling requirement is actually stronger then that required for the weak axiom (Figure \ref{fig:venn}).  There do exist choice scenarios where agents act rationally, but not exactly consistently over all budgets.
\begin{theorem}
The weak axiom is strictly weaker than no-signalling.
\end{theorem}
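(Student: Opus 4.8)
The statement that no-signalling is \emph{strictly} stronger than the weak axiom decomposes into two claims, and the plan is to treat them asymmetrically. The forward implication---every non-signalling scenario obeys the weak axiom---is exactly Theorem \ref{thm:nsImplies}, which I would simply invoke rather than reprove. The entire substance of this theorem is therefore the strictness, and so the whole proof reduces to exhibiting a single choice scenario that obeys the weak axiom yet fails no-signalling.

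To locate such a witness, first I would pin down precisely where the two conditions come apart. No-signalling demands that for every pair of budgets $A,B$ the choice functions $f_A$ and $f_B$ agree on \emph{all} of $A\cap B$, including elements that are not chosen; the weak axiom, by contrast, only constrains pairs in which some common element is actually chosen in the relevant budget. The gap thus opens exactly when a shared alternative is chosen in one budget but not the other, \emph{and} no other shared alternative is chosen that would trigger the hypothesis of the axiom. This analysis suggests engineering a scenario in which the intersection $A\cap B$ consists of a single alternative, selected in exactly one of the two budgets.

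Concretely, I would take $X=\{a,b,c\}$ with budgets $A=\{a,b\}$ and $B=\{b,c\}$, and set $C(A)=\{\{a\}\}$ and $C(B)=\{\{b\}\}$. This is precisely the contextual-but-weak-axiom-obeying scenario already produced in the preceding theorem, so I can reuse it verbatim and borrow its verification of the weak axiom. Computing the choice functions gives $f_A(b)=0$ while $f_B(b)=1$, so $f_A$ and $f_B$ disagree on $A\cap B=\{b\}$ and no-signalling fails. The weak axiom then holds by a short vacuity check: since $b$ is the only common alternative and is chosen in $B$ but not in $A$, the hypothesis of the axiom (that a common element be chosen in the budget from which we restrict) is never satisfied for either ordered pair $(A,B)$ or $(B,A)$, and it is trivially satisfied within each single budget.

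I do not anticipate a genuine obstacle here, as the construction is an off-the-shelf two-budget example. The one point demanding care is the asymmetry in the quantification of the weak axiom: I must confirm the vacuity argument covers \emph{both} orderings of the budget pair, not just one. The conceptual remark worth stating, which explains why the inclusion is strict rather than an equivalence, is that no-signalling is a constraint on non-choices just as much as on choices, whereas the weak axiom is entirely silent about an alternative that is merely absent from the support of a budget's choice rule.
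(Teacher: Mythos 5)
Your proposal is correct and takes essentially the same approach as the paper: both exhibit a two-budget counterexample in which the choice functions disagree on $A\cap B$ (so no-signalling fails) while the weak axiom holds vacuously because its hypothesis is never triggered. The paper's witness sets $f_A(x)=1$ and $f_A(y)=f_B(x)=f_B(y)=0$ for $x,y\in A\cap B$, whereas you reuse the earlier contextual example with a singleton intersection chosen in only one budget; this difference is immaterial, and your explicit check of both orderings of the budget pair is if anything more careful than the paper's one-line proof.
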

\begin{proof}
Consider the following example for $x,y\in A\cap B$. Let $f_A(x) = 1$ and $f_A(y)=f_B(x)=f_B(y)=0$.
\end{proof}
As in the previous discussion of the weak axiom and contextuality, the economic condition is weaker than the physically influenced one because non-choices are not required to be consistent.  One can freely not choose an alternative in one budget, but then choose it in another if there is no preferable option.  In an empirical scenario though, the observed outcome of "not-chosen" needs to be considered in consistency conditions.

\begin{figure}[t]
\centering
\includegraphics[width=0.5\textwidth]{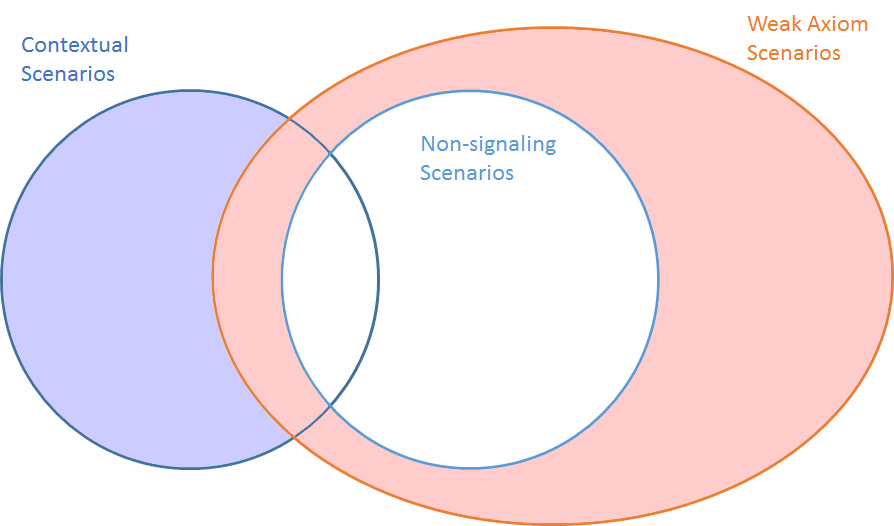}
\caption{The overlapping space of scenarios, summarizing the results of Section \ref{sec:space}.  Scenarios in the red region are specified in Theorem \ref{thm:wa-ns}. Scenarios in the blue region are charecterized by Theorem \ref{thm:context-wa}.}
\label{fig:venn}
\end{figure}

Still, we can give a clear picture of when the two are equivalent.
\begin{theorem}
\label{thm:wa-ns}
The weak axiom is equivalent to no-signalling if a choice scenario has the following property for all budgets $A$ and $B$:
\begin{align}
\label{prop:WANS}
\mbox{there exists not necc. distinct $x,y\in A\cap B$ s.t. $x\in C(A)$ and $y\in C(B)$}.
\end{align}
\end{theorem}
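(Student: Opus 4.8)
The plan is to notice that one half of the equivalence is already in hand. Theorem~\ref{thm:nsImplies} establishes that every non-signalling choice scenario obeys the weak axiom, and that argument nowhere invokes property~(\ref{prop:WANS}). So the entire content of this theorem is the converse: under~(\ref{prop:WANS}), the weak axiom forces non-signalling. Throughout I would work with the choice functions $f_A$, reading the clause ``$x\in C(A)$'' as $f_A(x)=1$, so that no-signalling becomes the assertion $f_A(z)=f_B(z)$ for every pair of budgets $A,B$ and every $z\in A\cap B$.

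First I would fix an arbitrary pair of budgets $A,B$. Property~(\ref{prop:WANS}) then supplies witnesses $x,y\in A\cap B$ with $f_A(x)=1$ and $f_B(y)=1$. Now take any $z\in A\cap B$ and suppose $f_A(z)=1$. Applying the weak axiom to the pair $(A,B)$ with the chosen element $z\in C(A)$ and the witness $y\in C(B)$, both of which lie in $A\cap B$, yields $z\in C(B)$, i.e. $f_B(z)=1$. Conversely, if $f_B(z)=1$, the symmetric application of the weak axiom, now pairing $z\in C(B)$ with the witness $x\in C(A)$, gives $f_A(z)=1$. Hence $f_A(z)=1$ if and only if $f_B(z)=1$, so $f_A(z)=f_B(z)$; since $z\in A\cap B$ was arbitrary this is exactly $f_A|(A\cap B)=f_B|(A\cap B)$, which is non-signalling. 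Combined with Theorem~\ref{thm:nsImplies}, the two properties are equivalent under~(\ref{prop:WANS}).

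The step I expect to be the crux is seeing \emph{why} property~(\ref{prop:WANS}) is the right hypothesis rather than an arbitrary one. The earlier proof that the weak axiom is strictly weaker than no-signalling was witnessed by a scenario with $f_A(x)=1$ but $f_A(y)=f_B(x)=f_B(y)=0$: there the weak axiom holds only \emph{vacuously}, precisely because no chosen element of $B$ sits in $A\cap B$ to play the role of the ``$y$'' that activates the axiom, so the non-choice of $x$ in $B$ is left entirely unconstrained. Property~(\ref{prop:WANS}) eliminates exactly this loophole by guaranteeing a chosen witness on both sides of every intersection, and it is this witness that lets the weak axiom promote a single instance of choice-consistency into the full choice-and-non-choice consistency that no-signalling demands. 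The remainder is bookkeeping; the only point to verify carefully is the degenerate case where the two witnesses coincide, but the ``not necessarily distinct'' clause in~(\ref{prop:WANS}) together with the fact that the weak axiom is stated for possibly equal $x,y$ makes this harmless.
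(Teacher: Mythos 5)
Your proposal is correct and follows essentially the same route as the paper: the non-signalling~$\Rightarrow$~weak-axiom direction is delegated to Theorem~\ref{thm:nsImplies}, and the converse is proved by using the witness supplied by property~(\ref{prop:WANS}) to activate the weak axiom and force $f_A$ and $f_B$ to agree on $A\cap B$. The only cosmetic difference is that the paper argues by contradiction (assuming some $z$ with $f_A(z)=1$, $f_B(z)=0$) while you argue directly and spell out both symmetric cases, which is the same key idea.
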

\begin{proof}
We show this by contradiction.  Consider a choice scenario that obeys the weak axiom and property (\ref{prop:WANS}) but that is not no-signalling.  A signalling choice scenario has some $z\in A\cap B$ such that $f_A(z)\neq f_B(z)$. Without loss of generality we can take $f_A(z)=1$ and $f_B(z) = 0$. By property (\ref{prop:WANS}), we can always find some $y\in A\cap B$ such that $f_B(y)=1$, yet the weak axiom states that if $f_A(z)=1$ and $f_B(y)=1$ then $f_B(z) = 1$ causing a contradiction.
\end{proof}
In economic terms this property (\ref{prop:WANS}) states that an agent, when presented with two budgets that have overlapping alternatives, must, in each budget, choose at least one alternative from the overlapping ones. 

Results from this section are summarized in Figure \ref{fig:venn}, showing how contextual, non-signalling, and weak axiom scenarios are interrelated. 

\section{Probabilistic Choices}

Our setting naturally generalizes to probabilistic choice scenarios.  Here the outcome set is $O = [0,1]$, where $1$ indicates that an element $x\in X$ is always chosen, and the choice rule $C$ is replaced with  probability distributions $d_U:X\to O$ for each context.  

Every probabilistic choice scenario can be reduced to an underlying choice scenario by taking $C(U)$ as the support of $d_U$. In this way we are able to label probabilistic choice scenarios contextual or strongly contextual according to the property of this underlying choice scenario.

In these probabilistic settings, we can use a single \emph{logical bell inequality} to test for contextuality \cite{logicalBell}. Further, these inequalities provide a metric for understanding the degree to which a scenario is contextual from how close a scenario is to the bound.  Scenarios that maximally violate a bell inequality are exactly the strongly contextual ones \cite{logicalBell}.
\section{Discussion}
In this paper, we have constructed a framework for analyzing the contextuality of choices using tools from quantum foundations, mapping out the logical space of some important physical and economic principles.  We show that a series of logical Bell inequalities can be applied to understand the degree of contextuality of choice scenarios.  In economic terms, these results are of particular interest for the following reasons:

The observation of contextual choices can imply the non-existence of consistent internal models of choice.  To the extent that we believe individual agents not to be purely motivated by individual preferences, measures of contextuality can act as measurements of collusion as they categorically represent behaviors that cannot be generated from fixed individual choices.

To the extent that we doubt agents are well modelled by individual preferences, contextual frameworks provide a setting to investigate more general alternatives without sacrificing rigor and specificity. The foundations of quantum computation, that are well suited to deal with contextualities,  provide a rich toolbox for economics to address these problems.

Even if the phenomena of contextuality is, at present, rarely observed it is still relevant because we know that it is possible to physically implement it with quantum systems. One could program quantum computers to make contextual choices and, by so doing, allow for more behaviors than can be modelled classically if it proves advantageous.

\subsection*{Acknowledgements}
The authors thank Samson Abramsky for suggesting investigation of the no-signalling condition. Zeng gratefully acknowledges the support of The Rhodes Trust and Zahn gratefully acknowledges financial support by the Deutsche Forschungsgemeinschaft (DFG) through SFB 884 ``Political Economy of Reforms''.

\bibliographystyle{splncs03}
\bibliography{bibliography}

\end{document}